\newtheorem{theorem}{Theorem}[section]
\newtheorem{corollary}[theorem]{Corollary}
\numberwithin{equation}{section}
\newcommand{\qed}{\rule{7pt}{7pt}}
\newenvironment{proof}{\noindent{\bf Proof}\hspace*{1em}}{\hfill\qed\vspace{0.125in}}
\newcommand{\x}{\mathbf{x}}
\newcommand{\y}{\mathbf{y}}
\newcommand{\z}{\mathbf{z}}
\newcommand{\w}{\mathbf{w}}
\newcommand{\e}{\mathbf{e}}
\def\bfe{{\mathbf e}}
\def\bfh{{\mathbf h}}
\def\bfv{{\mathbf v}}
\def\bfw{{\mathbf w}}
\def\bfx{{\mathbf x}}
\def\bfy{{\mathbf y}}
\def\bfz{{\mathbf z}}
\begin{document}
%
\title{On State Estimation with Bad Data Detection}

\author{Weiyu Xu, Meng Wang, and Ao Tang \\
    School of ECE, Cornell University, Ithaca, NY 14853}


%


\maketitle

\begin{abstract}
In this paper, we consider the problem of state estimation through observations possibly corrupted with both bad data and additive observation noises. A mixed $\ell_1$ and $\ell_2$ convex programming is used to separate both sparse bad data and additive noises from the observations. Through using the almost Euclidean property for a linear subspace, we derive a new performance bound for the state estimation error under sparse bad data and additive observation noises. Our main contribution is to provide sharp bounds on the almost Euclidean property of a linear subspace, using the ``escape-through-a-mesh'' theorem from geometric functional analysis. We also propose and numerically evaluate an iterative convex programming approach to performing bad data detections in nonlinear electrical power networks problems.
\end{abstract}

\IEEEpeerreviewmaketitle

\section{Introduction}
In this paper, we study the problem of state estimation under both bad data and observation noise. In state estimation problems, the observations may be corrupted with abnormally large measurement errors, called bad data, in addition to the usual additive observation noise. More specifically, suppose we want to estate the state $\x$ described by an $m$-dimensional real-numbered  vector, and we make $n$ measurements, then these measurements can be written as an $n$-dimensional vector $\y$, which is related to the state vector through the measurement equation
\begin{equation}
\y=h(\x)+\bfv+\e,
\label{eq: powermodel}
\end{equation}
where $h(\x)$ is a nonlinear function relating the measurement vector to the state vector, and $\bfv$ is the vector of measurement noise, and $\e$ is the vector of bad data imposed on the measurement. In this paper, we assume that $\bfv$ is an $m$-dimensional vector with i.i.d. zero mean Gaussian elements of variance $\sigma^2$. We also assume that $\e$ is a vector with only $k$ nonzero entries, and the nonzero entries can take arbitrary real-numbered values, reflecting the nature of bad data.

It is well known that Least Square (LS) method can be used to suppress the effect of observation noise on state estimations. In LS method, we try to find a vector $\x$ minimizing
\begin{equation}
\|\y-h(\x)\|_{2}.
\label{eq:ls}
\end{equation}
However, the LS method generally only works well when there are no bad data $\e$ corrupting the observation $\y$.

In this paper, a mixed least $\ell_1$ norm and least square convex programming is used to simultaneously detect bad data and subtract additive noises from the observations. In our theoretical analysis of the decoding performance, we assume $h(\x)$ is a linear transformation $H\x$ with $H$ as an $n \times m$ matrix with i.i.d. standard zero mean Gaussian entries. Through using the almost Euclidean property for the linear subspace generated by $H$, we derive a new performance bound for the state estimation error under sparse bad data and additive observation noises.  In our analysis,  using the ``escape-through-a-mesh'' theorem from geometric functional analysis \cite{Gordon}, we are able to significantly improve on the bounds for the almost Euclidean property of a linear subspace, which may be interesting in a more general mathematical setting. Compared with earlier analysis on the same optimization problem in \cite{CandesErrorCorrection}, the analysis is new using the almost Euclidean property rather than the restricted isometry conditions used in \cite{CandesErrorCorrection}, and we are able to give explicit bounds on the error performance, which is sharper than the analysis using the restricted isometry conditions in \cite{CandesErrorCorrection}.

Inspired by bad data detection methods for linear systems, we further propose an iterative convex programming approach to perform combined bad data detection and denoising  in nonlinear electrical power networks. The static state of an electric power network can be described by the vector of bus voltage magnitudes and angles in power networks. However, in smart grid power networks, the measurement of these quantities can be corrupted due to errors in the sensors, communication errors in transmitting the measurement results, and adversarial compromises of the meters. So the state estimation of power networks needs to detect, identify, and eliminate large measurement errors \cite{BC,MG,FB}. Since the probability of large measurement errors occurring is very small, it is reasonable to assume that bad data are only present in a small fraction of the available meter measurements results. So bad data detection in power networks can be viewed as a sparse error detection problem, which shares similar mathematical structures as sparse recoveries problem in compressive sensing \cite{CT1, CandesErrorCorrection}. However, this problem in power networks has several unique properties when compared with ordinary sparse error detection problem \cite{CT1}. In fact, $h(\x)$ in (\ref{eq: powermodel}) is a nonlinear mapping instead of a linear mapping in \cite{CandesErrorCorrection}. Our iterative convex programming based algorithms work is shown by numerical examples working well in this nonlinear setting. Compared with \cite{KV82}, which proposed to apply $\ell_1$ minimization in bad data detection in power networks, our approach offers a better decoding error performance when both bad data and additive observation noises are present. \cite{TL10a}\cite{TL10b} considered state estimations under malicious data attacks, and formulated the problem of state estimation under malicious attacks as a hypothesis testing problem by assuming a prior probability distribution on the state $\x$. In contrast, our approach does not rely on any prior information on the signal $\x$ itself, and the performance bounds hold for arbitrary state $\x$.

The rest of this paper is organized as follows. In Section \ref{sec:condition}, we introduce the convex programming to perform joint bad data detection and denoising,  and derive the  performance bound on the decoding error based on the almost Euclidean property of linear subspaces. In Section \ref{sec:boundingEuclidean}, a sharp bound on the almost Euclidean property is given through the ``escape-through-mesh'' theorem.  In Section \ref{sec:evaluating}, we will present explicit bounds on the decoding error. In Section \ref{sec:numerical}, we introduce our algorithm to perform bad data detection in nonlinear systems, and present simulation results of its performance in power networks.

\section{Bad Data Detection for Linear Systems}
\label{sec:condition}
In this section, we will introduce a convex programming formulation to do bad data detection in a linear systems, and give a characterization of its decoding error performance. In a linear system, the $n \times 1$ observation vector is $\y=H\x+\e+\bfv$, where $\x$ is the $m \times 1$ signal vector ($m<n$), $\e$ is a sparse error vector with $k$ nonzero elements, $\bfv$ is a noise vector with $\|\bfv\|_2 \leq \epsilon$. In what follows, we denote the part of any vector $\bfw$ over any index set $K'$ as $\bfw_{K'}$.

We solve the following optimization problem involving optimization variables $\x^*$ and $\z$,  and we then estimate the state $\x$ to be $\hat{\x}$, which is the optimizing value for $\x^*$.
\begin{eqnarray}
\min_{\x^*,\z}  &&\|\y-H\x^*-\z\|_{1},\nonumber \\
{\text{subject to}}&&\|\z\|_2 \leq \epsilon.
\label{eq:bus}
\end{eqnarray}
We are now ready to give the main theorem which bounds the decoding error performance of (\ref{eq:bus}).
\begin{theorem}
Let $\y$, $H$, $\x$, $\e$ and $\bfv$ are specified as above. Suppose that the minimum nonzero singular value of $H$ is $\sigma_{\text{min}}$. Let $C$ be a real number larger than $1$, and suppose that every vector $\bfw$ in the subspace generated by the matrix $H$ satisfies $C\|\bfw_K\|_{1} \leq \|\bfw_{\overline{K}}\|_{1}$ for any subset $K \subseteq \{1,2,...,n\}$ with cardinality $|K|\leq k$, where $k$ is an integer, and $\overline{K}=\{1,2,...,n\}\setminus K$. We also assume the subspace generated by $H$ satisfies the \emph{almost Euclidean} property for a constant $\alpha \leq 1$, namely
\begin{equation*}
\alpha \sqrt{n} \|\bfw\|_2 \leq \|\bfw\|_{1}
\end{equation*}
holds for every $\bfw$ in the subspace generated by $H$

Then the solution $\hat{\x}$ satisfies
\begin{equation}
\|\x-\hat{\x}\|_{2} \leq \frac{2(C+1)}{\sigma_{\text{min}} \alpha (C-1)}\epsilon.
\end{equation}
\label{thm:bound}
\end{theorem}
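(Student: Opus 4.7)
The plan is to exploit the optimality of the convex program to derive an $\ell_1$-inequality on $H(\x-\hat{\x})$, then to convert from $\ell_1$ to $\ell_2$ via the almost Euclidean property, and finally from a range-space bound to a state-error bound via $\sigma_{\min}$. First I would observe that the true pair $(\x,\bfv)$ is feasible in \eqref{eq:bus} since $\|\bfv\|_2\le\epsilon$, so if $(\hat{\x},\hat{\z})$ denotes an optimum, then $\|\y-H\hat{\x}-\hat{\z}\|_1 \le \|\y-H\x-\bfv\|_1 = \|\e\|_1$. Writing $\y-H\hat{\x}-\hat{\z} = H(\x-\hat{\x})+\e+(\bfv-\hat{\z})$ and setting $\bfw:=H(\x-\hat{\x})$ (which lies in the range of $H$) and $\bfr:=\bfv-\hat{\z}$ (for which the triangle inequality gives $\|\bfr\|_2\le 2\epsilon$), this becomes $\|\bfw+\e+\bfr\|_1 \le \|\e\|_1$.

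Next I would split the $\ell_1$ norm over $K:=\operatorname{supp}(\e)$ and $\overline{K}$, and apply the triangle inequality on each piece. On $K$ we have $\|\bfw_K+\e_K+\bfr_K\|_1 \ge \|\e\|_1 - \|\bfw_K\|_1 - \|\bfr_K\|_1$, and on $\overline{K}$ we have $\|\bfw_{\overline K}+\bfr_{\overline K}\|_1 \ge \|\bfw_{\overline K}\|_1 - \|\bfr_{\overline K}\|_1$. Adding the two and cancelling $\|\e\|_1$ yields $\|\bfw_{\overline K}\|_1 - \|\bfw_K\|_1 \le \|\bfr\|_1$. Now the assumed $C$-balancedness of the range of $H$ gives $\|\bfw_{\overline K}\|_1\ge C\|\bfw_K\|_1$, hence $\|\bfw_{\overline K}\|_1 - \|\bfw_K\|_1 \ge \frac{C-1}{C}\|\bfw_{\overline K}\|_1$ and $\|\bfw\|_1 = \|\bfw_{\overline K}\|_1+\|\bfw_K\|_1 \le \frac{C+1}{C}\|\bfw_{\overline K}\|_1$. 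Chaining these inequalities produces $\|\bfw\|_1 \le \frac{C+1}{C-1}\|\bfr\|_1$.

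The key quantitative step is then the passage $\|\bfr\|_1 \le \sqrt{n}\,\|\bfr\|_2 \le 2\sqrt{n}\,\epsilon$ via Cauchy--Schwarz, which injects the factor $\sqrt{n}$ on the right-hand side. Applying the almost Euclidean property to $\bfw\in\mathrm{range}(H)$ gives $\alpha\sqrt{n}\,\|\bfw\|_2 \le \|\bfw\|_1$, and the two factors of $\sqrt{n}$ cancel to leave $\|\bfw\|_2 \le \frac{2(C+1)}{\alpha(C-1)}\epsilon$. Finally, since $\bfw = H(\x-\hat{\x})$ and $\x-\hat{\x}\in\R^m$, the minimum singular value bound yields $\|\bfw\|_2\ge \sigma_{\min}\|\x-\hat{\x}\|_2$, which rearranges to the claimed estimate.

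The main obstacle I anticipate is keeping the constants clean through the two triangle-inequality splits: it is tempting to use $\|\bfv-\hat{\z}\|_2\le 2\epsilon$ directly in an $\ell_2$ setting, but the $C$-balancedness assumption is naturally phrased in $\ell_1$, so I must absorb the residual $\bfr$ in the $\ell_1$ norm and pay the $\sqrt{n}$ penalty, which is exactly what the almost Euclidean property is designed to refund. The rest of the argument is essentially bookkeeping of the coefficients $\tfrac{C+1}{C-1}$ and $\tfrac{1}{\alpha\sigma_{\min}}$.
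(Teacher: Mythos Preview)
Your proposal is correct and follows essentially the same route as the paper's proof: feasibility of $(\x,\bfv)$ gives the $\ell_1$ comparison, a triangle-inequality split over $K$ and $\overline{K}$ yields $\|\bfw_{\overline K}\|_1-\|\bfw_K\|_1\le 2\sqrt{n}\,\epsilon$, the $C$-balancedness converts this into $\|\bfw\|_1\le \tfrac{2(C+1)}{C-1}\sqrt{n}\,\epsilon$, and the almost Euclidean property together with $\sigma_{\min}$ finishes. The only cosmetic difference is that you bundle $\bfv$ and $\hat{\z}$ into a single residual $\bfr$ with $\|\bfr\|_2\le 2\epsilon$ before passing to $\ell_1$, whereas the paper bounds $\|\bfv\|_1$ and $\|\hat{\z}\|_1$ separately by $\sqrt{n}\,\epsilon$ each; both yield the same constant.
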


\begin{proof}
Suppose that one optimal solution set to (\ref{eq:bus}) is $(\hat{\bfx},\hat{\bfz})$. Since $\|z\|_{2} \leq \epsilon$, we have $\|\hat{\bfz}\|_{1} \leq \sqrt{n}\|\hat{\bfz}\|_{2} \leq  \sqrt{n} \epsilon $.

Since $\x^*=\x$ and $\bfz=\bfv$ is a feasible solution for (\ref{eq:bus}), then
\begin{eqnarray*}
&&\|\y- H\hat{\x}-\hat{\z}\|_{1}\\
&=&\|H(\x-\hat{\x})+\e+\bfv-\hat{\z}\|_{1}\\
&\leq& \|H(\x-\x)+\e+\bfv-\bfv\|_{1}\\
&=&\|\e\|_{1}.
\end{eqnarray*}

Applying the triangle inequality to $\|H(x-\hat{\x})+\e+\bfv-\hat{\z}\|_{1}$, we further obtain
\begin{equation*}
 \|H(\x-\hat{\x})+\e\|_{1}-\|\bfv\|_{1}-\|\hat{\bfz}\|_1 \leq \|\e\|_1.
\end{equation*}

Denoting $H(\x-\hat{\x})$ as $\w$, because $\e$ is supported on a set $K$ with cardinality $|K| \leq k$, by the triangle inequality for $\ell_1$ norm again,
\begin{equation*}
 \|\e\|_{1}-\|\w_{K}\|_1+\|\w_{\overline{K}}\|_1-\|\bfv\|_{1}-\|\hat{\z}\|_1\leq \|\e\|_1.
\end{equation*}

So we have
\begin{equation}
 -\|\w_{K}\|_1+\|\w_{\overline{K}}\|_1 \leq \|\hat{\z}\|_1+\|\bfv\|_{1} \leq 2\sqrt{n}\epsilon
\label{eq:differencebounded}
\end{equation}

With $C\|\w_K\|_{1} \leq \|\w_{\overline{K}}\|_{1}$, we know
\begin{equation*}
 \frac{C-1}{C+1} \|\w\|_{1} \leq -\|\w_{K}\|_1+\|\w_{\overline{K}}\|_1.
\end{equation*}

Combining this with (\ref{eq:differencebounded}), we obtain
\begin{equation*}
 \frac{C-1}{C+1} \|\w\|_{1} \leq 2\sqrt{n}\epsilon.
\end{equation*}

By the almost Euclidean property $\alpha \sqrt{n} \|\w\|_2 \leq \|\w\|_{1}$, it follows:
\begin{equation}
\|\w\|_{2} \leq \frac{2(C+1)}{\alpha (C-1)}\epsilon.
\label{eq:wl2norm}
\end{equation}

By the definition of singular values,
\begin{equation}
\sigma_{\text{min}} \|\x-\hat{\x}\|_2 \leq \|H(\x-\hat{\x})\|_2=\|\w\|_2,
\end{equation}
so combining (\ref{eq:wl2norm}), we get
\begin{equation*}
\|\x-\hat{\x}\|_{2} \leq \frac{2(C+1)}{\sigma_{\text{min}} \alpha (C-1)}\epsilon.
\end{equation*}
\end{proof}

Note that when there are no sparse errors present, the decoding error bound satisfies $\|\x-\hat{\x}\|_{2} \leq \frac{1}{\sigma_{\text{min}} }\epsilon$, Theorem \ref{thm:bound} shows that the decoding error of (\ref{eq:bus}) is  oblivious to the presence of bad data, no matter how large in amplitude these bad data can be. This phenomenon also observed in \cite{CandesErrorCorrection} by using the restricted isometry condition for compressive sensing.

We remark that, for given $\y$ and given $\epsilon$, by strong lagrange duality theory, the solution $\hat{\x}$ to (\ref{eq:bus}) will correspond to the solution to $\x$ in the following problem (\ref{eq:lambdaduality}) for some Lagrange duality variable $\lambda \geq 0$. As $\epsilon \geq 0$ increases, the corresponding $\lambda$ that produces the same solution to $\x$ will correspondingly decrease.
\begin{equation}\label{eq:lambdaduality}
\min_{\bfx, \bfz} \quad \|\bfy-H\bfx-\bfz\|_1 +\lambda \|\bfz\|_2.
\end{equation}
In fact, when $\lambda \rightarrow \infty$, (\ref{eq:lambdaduality})
approaches
\begin{equation}\nonumber
\min_{\bfx} \quad \|\bfy-H\bfx\|_1 ,
\end{equation}
and when $\lambda \rightarrow 0$,  (\ref{eq:lambdaduality})
approaches
\begin{equation}\nonumber
\min_{\bfx} \quad \|\bfy-H\bfx\|_2.
\end{equation}
Thus, (\ref{eq:lambdaduality}) can be viewed as a weighed version of
$\ell_1$ minimization and $\ell_2$ minimization (or equivalently the
LS method). We will later use numerical experiments to show that in
order to recover a sparse vector from measurements with both noise
and errors, this weighted version outperforms both $\ell_1$
minimization and the LS method.

 In the next two sections, we will aim at explicitly computing
$\frac{2(C+1)}{\sigma_{\text{min}} \alpha (C-1)} \times \sqrt{n}$,
which will denote $\varpi$ later in this paper. The appearance of
the $\sqrt{n}$ factor is to compensate for the energy scaling of
large random matrices and its meaning will be clear in later
context. Next, we will compute explicitly the almost Euclidean
property constant $\alpha$.

\section{Bounding the Almost Euclidean Property}
\label{sec:boundingEuclidean}
In this section, we would like to give a quantitative bound on the
almost Euclidean property constant $\alpha$ such that with high
probability (with respect to the measure for the subspace generated
by the random $H$), $\alpha \sqrt{n} \|\w\|_2 \leq \|\w\|_{1}$ holds
for every vector $\w$ from the subspace generated by $H$. Here we
assume that each element of $H$ is generated from the standard
Gaussian distribution $N(0,1)$. So the subspace generated by $H$ is
a uniformly distributed $(n-m)$-dimensional subspaces from the high dimensional geometry.

 To ensure that the subspace generated from $H$ satisfies the almost Euclidean property with $\alpha>0$, we must have the event that the subspace generated by $H$ does not intersect the set $\{\w\in S^{n-1}| \|\w\|_{1}< \alpha \sqrt{n} \|\w\|_2 \}$, where $S^{n-1}$ is the Euclidean sphere in $R^n$. To evaluate the probability that this event happens, we will need the following ``escape-through-mesh'' theorem.
\begin{theorem} \cite{Gordon}
\label{thm:escapethroughmesh}
Let $S$ be a subset of the unit Euclidean sphere $S^{n-1}$ in $R^n$. Let $Y$ be a random $m$-dimensional subspace of $R^{n}$, distributed uniformly in the Grassmanian with respect to the Haar measure. Let $w(S)=E(\sup_{\w\in S}(\bfh^T\bfw))$, where $\bfh$ is a random column vector in $R^{n}$ with i.i.d. $N(0,1)$ components. Assume that $w(S) < (\sqrt{n-m}-\frac{1}{2\sqrt{n-m}})$. Then
\begin{equation*}
P(Y \bigcap S=\emptyset)>1-3.5 e^{-\frac{(\sqrt{n-m}-\frac{1}{2\sqrt{n-m}})-w(S)}{18}}.
\end{equation*}
\end{theorem}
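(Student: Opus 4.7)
The plan is to translate the geometric statement into a Gaussian min-max inequality and then apply Gordon's Gaussian comparison theorem followed by Gaussian concentration of measure. Since $Y$ is a uniformly distributed $m$-dimensional subspace of $\R^n$, it can be realized as $Y=\ker G$ for an $(n-m)\times n$ matrix $G$ with i.i.d.\ $N(0,1)$ entries. Using $\sup_{\|\bfu\|_2=1}\bfu^\top G\bfw=\|G\bfw\|_2$, the event in question is
\[
Y\cap S=\emptyset \iff \inf_{\bfw\in S}\|G\bfw\|_2>0 \iff \inf_{\bfw\in S}\sup_{\|\bfu\|_2=1}\bfu^\top G\bfw>0,
\]
where $\bfu$ ranges over the unit sphere in $\R^{n-m}$. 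The task therefore reduces to lower-bounding the Gaussian min-max process $X_{\bfu,\bfw}:=\bfu^\top G\bfw$.

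Next I would invoke Gordon's min-max comparison inequality against the decoupled auxiliary process $Y_{\bfu,\bfw}:=\bfg^\top\bfw+\bfh^\top\bfu$, where $\bfg\in\R^n$ and $\bfh\in\R^{n-m}$ are independent standard Gaussian vectors. A direct covariance computation shows that for unit vectors $\bfw,\bfw'$ and $\bfu,\bfu'$,
\[
\E[(Y_{\bfu,\bfw}-Y_{\bfu',\bfw'})^2]-\E[(X_{\bfu,\bfw}-X_{\bfu',\bfw'})^2]=2(1-\bfu\cdot\bfu')(1-\bfw\cdot\bfw')\geq 0,
\]
with equality whenever $\bfw=\bfw'$. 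This is exactly the covariance hypothesis Gordon's theorem requires in the direction that produces a lower bound on $\E[\min\max X]$, so one obtains
\[
\E\bigl[\inf_{\bfw\in S}\sup_{\|\bfu\|=1}\bfu^\top G\bfw\bigr]\;\geq\;\E\|\bfh\|_2-\E\bigl[\sup_{\bfw\in S}\bfg^\top\bfw\bigr]\;\geq\;\Bigl(\sqrt{n-m}-\tfrac{1}{2\sqrt{n-m}}\Bigr)-w(S),
\]
using the classical sharp lower bound on $\E\|\bfh\|_2$ for $(n-m)$-dimensional Gaussians and the symmetry of $\bfg$.

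Finally I would upgrade this expectation bound to a high-probability statement via Gaussian concentration of measure. The map $G\mapsto\inf_{\bfw\in S}\|G\bfw\|_2$ is $1$-Lipschitz in the Frobenius norm of $G$, since $\bigl|\|G\bfw\|_2-\|G'\bfw\|_2\bigr|\leq\|(G-G')\bfw\|_2\leq\|G-G'\|_F$ for $\|\bfw\|_2=1$. The standard Lipschitz concentration inequality for Gaussians then yields that the infimum falls below its expectation by more than $t$ only on a Gaussian tail event. Choosing $t$ so that the resulting lower bound is zero, and carefully tracking the numerical constants through the concentration estimate together with the sharp bound for $\E\|\bfh\|_2$, produces the stated tail of the form $3.5\,e^{-\mathrm{gap}/18}$.

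The main obstacle is the Gordon comparison step: several equivalent formulations of the theorem exist, and obtaining a lower bound on $\E[\min\max X]$ (rather than the default Slepian-style upper bound on $\E[\max]$) requires applying the version in which the two covariance inequalities point in the correct directions on the diagonal $\bfw=\bfw'$ versus off it; a sign error here reverses the conclusion entirely. Once the comparison is set up, the concentration step is essentially mechanical, and the universal constants $3.5$ and $18$ in the tail bound come purely from bookkeeping inside the Lipschitz concentration inequality and the estimate of $\E\|\bfh\|_2$.
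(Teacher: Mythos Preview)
The paper does not prove this theorem at all: it is quoted verbatim from Gordon \cite{Gordon} and used as a black box, so there is no ``paper's own proof'' to compare against. Your sketch is in fact a faithful outline of Gordon's original argument---realize the random subspace as $\ker G$ for a Gaussian matrix, rewrite $Y\cap S=\emptyset$ as positivity of $\inf_{\bfw\in S}\sup_{\|\bfu\|=1}\bfu^\top G\bfw$, compare covariances with the decoupled process $\bfg^\top\bfw+\bfh^\top\bfu$ to invoke Gordon's min--max inequality, and then pass from expectation to high probability via Lipschitz concentration. The covariance identity you wrote and the $1$-Lipschitz claim are both correct.

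One small point worth flagging: Gaussian concentration for a $1$-Lipschitz function gives a tail of the form $\exp(-t^2/2)$, so tracking constants naturally produces an exponent quadratic in the gap $\bigl(\sqrt{n-m}-\tfrac{1}{2\sqrt{n-m}}\bigr)-w(S)$, as in Gordon's original statement; the linear exponent printed in the paper appears to be a transcription slip rather than something your argument should try to reproduce.
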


From Theorem \ref{thm:escapethroughmesh}, we can use the following programming to get an estimate of the upper bound of $w(\bfh, S)$. Because the set $\{\w\in S^{n-1}| \|\w\|_{1}< \alpha \sqrt{n} \|\w\|_2 \}$ is symmetric, without loss of generality, we assume that the elements of $\bfh$ follow i.i.d.  half-normal distributions, namely the distribution for the absolute value of a standard zero mean Gaussian random variables. With $h_i$ denoting the $i$-th element of $\bfh$, this is equivalent to
\begin{eqnarray}
\label{eq:maxproductoptimization}
\max && \sum_{i=1}^{n} h_{i} y_{i}\\
{\text{subject to}}&& y_{0} \geq 0, 1\leq i \leq n\\
&&\sum_{i=1}^{n}y_{i} \leq \alpha \sqrt{n}\\
&&\sum_{i=1}^{n} y_{i}^2=1.
\end{eqnarray}

Following the method from \cite{StojnicThresholds}, we use the Lagrange duality to find an upper bound for the objective function of (\ref{eq:maxproductoptimization}).
\begin{eqnarray}
&&\min_{u_{1} \geq 0, u_{2}\geq 0, \lambda \geq 0}\max_{w}\bfh^{T} \w-u_{1}(\sum_{i=1}^{n}w_i^2-1)\\
&&-u_2(\sum_{i=1}^{n}w_{i}-\alpha \sqrt{n})+\sum_{i=1}^{n} \lambda_{i} w_{i},
\label{eq:maxmin}
\end{eqnarray}
where $\lambda$ is a vector $(\lambda_{1}, \lambda_{2}, ..., \lambda_{n})$.

First, we maximize (\ref{eq:maxmin}) over $w_{i}$, $i=1,2, ..., n$ for fixed $u_{1}$, $u_{2}$ and $\lambda$. By making the derivatives to be zero, the minimizing $w_{i}$ is given by
\begin{equation*}
w_{i}=\frac{h_{i}+\lambda_{i}-u_{2}}{2u_{1}},  1\leq i \leq n
\end{equation*}

Plugging this back, we get
\begin{eqnarray}
 &&\bfh^{T} \w-u_{1}(\sum_{i=1}^{n}w_i^2-1)\\
 &&-u_2(\sum_{i=1}^{n}w_{i}-\alpha \sqrt{n})+\sum_{i=1}^{n} \lambda_{i} w_{i}\\
 &&=\frac{\sum_{i=1}^{n}{(-u_2+\lambda_{i}+h_i)^2}}{4 u_{1}}+u_1+\alpha \sqrt{n}u_2.
 \label{eq:insidemin}
\end{eqnarray}

Next, we minimize (\ref{eq:insidemin}) over $u_1 \geq 0$. It is not
hard to see the minimizing $u_{1}^*$ is
\begin{equation*}
 u_{1}^*=\frac{\sqrt{\sum_{i=1}^{n}{(-u_2+\lambda_{i}+h_i)^2}}}{2},
\end{equation*}
and the corresponding minimized value is
\begin{equation}\label{eqn:insidemin2}
{\sqrt{\sum_{i=1}^{n}{(-u_2+\lambda_{i}+h_i)^2}}}+\alpha \sqrt{n}
u_2.
\end{equation}

Then, we minimize (\ref{eqn:insidemin2}) over $\lambda \geq 0$. Given $\bfh$ and $u_2 \geq 0$, it is easy to see that the minimizing
$\lambda$ is

\[ \lambda_i = \left\{ \begin{array}{ll}
         u_2-h_i & \mbox{if $h_i \leq u_2$};\\
        0 & \mbox{otherwise},\end{array} \right. \]
and the corresponding minimized value is
\begin{equation}\label{eq:upperboundinstance}
\sqrt{\sum_{1 \leq i \leq n:\\ h_i <u_2}(u_2-h_i)^2}+\alpha
\sqrt{n}u_2.
\end{equation}


Now if we take any $u_2 \geq 0$, (\ref{eq:upperboundinstance}) serves
as an upper bound for (\ref{eq:maxmin}). Since $\sqrt{\cdot}$ is a
concave function, by Jensen's inequality, we have for any given $u_2
\geq 0$,
\begin{equation}\label{eqn:upper}
E(\sup_{\w\in S}(\bfh^Tw)) \leq \sqrt{ E\{\sum_{1\leq i \leq n: h_i
<u_2}{(u_2-h_i)^2}\} }+\alpha \sqrt{n} u_2.
\end{equation}
Since $\bfh$ has i.i.d. half-normal components, the righthand
side of (\ref{eqn:upper}) equals to
\begin{equation}\label{eqn:boundu2}
(\sqrt{(u_2^2+1)\textrm{erfc}(u_2/\sqrt{2})-\sqrt{2/\pi}u_2e^{-u_2^2/2}}+\alpha
u_2)\sqrt{n},
\end{equation}
where $\textrm{erfc}$ is the error function.

One can check that (\ref{eqn:boundu2}) is convex in $u_2$. Given
$\alpha$, we minimize (\ref{eqn:boundu2}) over $u_2 \geq 0$ and let
$g(\alpha)\sqrt{n}$ denote the minimum value. Then from
(\ref{eqn:upper}) and (\ref{eqn:boundu2}) we know
\begin{equation}
w(S)=E(\sup_{\w\in S}(\bfh^T\w)) \leq g(\alpha) \sqrt{n}.
\end{equation}
Given $\delta=\frac{m}{n}$, we pick the largest $\alpha^*$ such that
$g(\alpha^*) <\sqrt{1-\delta}$. Then as $n$ goes to infinity, it
holds that
\begin{equation}
w(S)\leq g(\alpha^*)\sqrt{n}<(\sqrt{n-m}-\frac{1}{2\sqrt{n-m}}).
\end{equation}
Then from Theorem \ref{thm:escapethroughmesh}, with high
probability  $\|\w\|_1 \geq \alpha^*\sqrt{n}\|\w\|_2$ holds for
every vector $\w$ in the subspace generated by $H$. We numerically
calculate how $\alpha^*$ changes over $\delta$ and plot the curve in
Fig. \ref{fig:alpha}. For example, when $\delta=0.5$,
$\alpha^*=0.332$, thus $\|\w\|_1 \geq 0.332\sqrt{n}\|\w\|_2$ for all
$\w$ in the subspace generated by $H$.

\begin{figure}[t] \centering
\includegraphics[scale=0.5]{./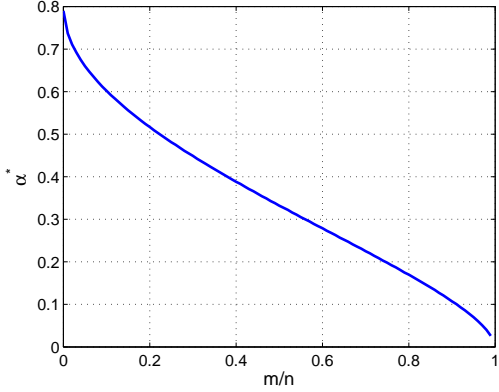}
\caption{$\alpha^*$ over $m/n$}\label{fig:alpha}
\end{figure}

Note that when $\frac{m}{n}=\frac{1}{2}$, we get $\alpha=0.332$. That is much larger than the known $\alpha$ used in \cite{Yin}, which is approximately $0.07$ (see Equation (12) in \cite{Yin}). When applied to the sparse recovery problem considered in \cite{Yin}, we will be able to recover any vector with no more than $0.0289n=0.0578m$ nonzero elements, which are $20$ times more than the $\frac{1}{384}m$ bound in \cite{Yin}.

 \section{Evaluating the Robust Error Correction Bound}
\label{sec:evaluating}
If the elements in the measurement matrix $H$ are i.i.d. as the unit
real Gaussian random variables $N(0,1)$, following upon the work of
Marchenko and Pastur \cite{Marcenko67}, Geman\cite{Geman80} and
Silverstein \cite{silver} proved that for $m/n=\delta$, as $n
\rightarrow \infty$, the smallest nonzero singular value
\begin{equation*}
\frac{1}{\sqrt{n}}\sigma_{\text{min}} \rightarrow 1-\sqrt{\delta}
\end{equation*}
almost surely as $n \rightarrow \infty$.

Now that we have already explicitly bounded $\alpha$ and $\sigma_{\text{min}}$, we now proceed to characterize $C$. It turns out that our earlier result on the almost Euclidean property can be used to computed $C$.
\begin{theorem}
 Suppose an $n$-dimensional vector $\w$ satisfies $\|\w\|_{1}\geq \alpha \sqrt{n} \|\w\|_2$. Then if for some set $K \subseteq \{1,2,...,n\}$ with cardinality $|K|=k \leq n$,
\begin{equation*}
\frac{\|\w_{K}\|_1}{\|\w\|_{1}}= \beta,
\end{equation*}
then $\beta$ must be a number satisfying
\begin{equation*}
\frac{\beta^2}{k}+\frac{(1-\beta)^2}{n-k} \leq \frac{1}{\alpha^2 n}
\end{equation*}
\label{thm:almostEuclideanB}
\end{theorem}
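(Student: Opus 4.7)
The plan is to deduce the stated inequality from a two-step Cauchy--Schwarz argument combined with the almost Euclidean hypothesis. The key observation is that for any $t$-dimensional real vector $\bfu$ we have $\|\bfu\|_{1} \leq \sqrt{t}\,\|\bfu\|_{2}$, equivalently $\|\bfu\|_{2} \geq \|\bfu\|_{1}/\sqrt{t}$. Applying this twice, once with $t=k$ on $\w_{K}$ and once with $t=n-k$ on $\w_{\overline{K}}$, converts the $\ell_{1}$ mass information captured by $\beta$ into an $\ell_{2}$ lower bound on the two pieces of $\w$.

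First I would record
\begin{equation*}
\|\w_{K}\|_{2} \;\geq\; \frac{\|\w_{K}\|_{1}}{\sqrt{k}} \;=\; \frac{\beta\,\|\w\|_{1}}{\sqrt{k}}, \qquad
\|\w_{\overline{K}}\|_{2} \;\geq\; \frac{\|\w_{\overline{K}}\|_{1}}{\sqrt{n-k}} \;=\; \frac{(1-\beta)\,\|\w\|_{1}}{\sqrt{n-k}}.
\end{equation*}
Since $K$ and $\overline{K}$ partition the coordinates, $\|\w\|_{2}^{2} = \|\w_{K}\|_{2}^{2} + \|\w_{\overline{K}}\|_{2}^{2}$, so squaring and adding yields
\begin{equation*}
\|\w\|_{2}^{2} \;\geq\; \Bigl(\frac{\beta^{2}}{k} + \frac{(1-\beta)^{2}}{n-k}\Bigr)\|\w\|_{1}^{2}.
\end{equation*}

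Next I would invoke the almost Euclidean hypothesis $\|\w\|_{1}\geq \alpha\sqrt{n}\,\|\w\|_{2}$, equivalently $\|\w\|_{2}^{2}/\|\w\|_{1}^{2}\leq 1/(\alpha^{2} n)$. Dividing the previous display by $\|\w\|_{1}^{2}$ (assuming $\w\neq 0$, which is implicit in the hypotheses) and chaining the two bounds gives exactly
\begin{equation*}
\frac{\beta^{2}}{k} + \frac{(1-\beta)^{2}}{n-k} \;\leq\; \frac{\|\w\|_{2}^{2}}{\|\w\|_{1}^{2}} \;\leq\; \frac{1}{\alpha^{2} n},
\end{equation*}
which is the inequality in the theorem.

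There is essentially no obstacle: each Cauchy--Schwarz step is tight only when the corresponding piece of $\w$ is ``flat'' (constant magnitude on its support), and the overall chain simply packages the intuition that concentrating $\ell_{1}$ mass on a small subset $K$ forces a large $\ell_{2}$ norm, which must be compatible with the subspace's near-Euclidean geometry. The only minor care needed is to note that $\|\w\|_{1}>0$ whenever $\alpha>0$ and $\w\neq 0$, so the division above is valid, and to handle the degenerate corner cases $k=0$ or $k=n$ (in which one of the pieces is empty) by interpreting the corresponding term as $0$ when its numerator vanishes.
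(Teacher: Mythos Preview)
Your proof is correct and follows essentially the same approach as the paper: apply Cauchy--Schwarz separately on $\w_{K}$ and $\w_{\overline{K}}$ to lower-bound $\|\w\|_{2}^{2}$ by $(\beta^{2}/k+(1-\beta)^{2}/(n-k))\|\w\|_{1}^{2}$, then combine with the almost Euclidean hypothesis. The only cosmetic difference is that the paper normalizes $\|\w\|_{1}=1$ at the outset rather than dividing through at the end.
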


\begin{proof}
Without loss of generality, we let $\|\w\|_1=1$. Then by the Cauchy-Schwarz inequality,
\begin{eqnarray*}
\|\w\|_{2}^2&=&\|\w_{K}\|_{2}^2+\|\w_{\overline{K}}\|_{2}^2\\
&\geq& (\frac{\|\w_{K}\|_{1}}{\sqrt{k}})^2 + (\frac{\|\w_{\overline{K}}\|_{1}}{\sqrt{n-k}})^2\\
&=& (\frac{\beta^2}{k}+\frac{(1-\beta)^2}{n-k})\|\w\|_{1}^2.
\end{eqnarray*}

At the same time, by the almost Euclidean property,
\begin{equation*}
\alpha^2 n \|\w\|_{2}^2 \leq \|\w\|_{1}^2,
\end{equation*}
so we must have
\begin{equation*}
\frac{\beta^2}{k}+\frac{(1-\beta)^2}{n-k} \leq \frac{1}{\alpha^2 n}
\end{equation*}
\end{proof}

\begin{corollary}
If a nonzero $n$-dimensional vector $\w$ satisfies $\|\w\|_{1}\geq \alpha \sqrt{n} \|\w\|_2$, and for any set $K \subseteq \{1,2,...,n\}$ with cardinality $|K|=k \leq n$, if $C\|\w_{K}\|_{1}= \|\w_{\overline{K}}\|_1$ for some number $C\geq 1$, then
\begin{equation}
\frac{k}{n} \geq \frac{(B+1-C^2)-\sqrt{(B+1-C^2)^2-4B}}{2B},
\label{eq:largestsparsity}
\end{equation}
where $B=\frac{(C+1)^2}{\alpha^2}$.
\end{corollary}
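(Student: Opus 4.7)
The plan is to reduce the corollary directly to Theorem~\ref{thm:almostEuclideanB} and then extract a quadratic inequality in $k/n$ whose solution yields the claimed bound.

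First I would compute $\beta$ explicitly from the hypothesis. Since $\|\w\|_1 = \|\w_K\|_1 + \|\w_{\overline{K}}\|_1 = (1+C)\|\w_K\|_1$, we have
\begin{equation*}
\beta \;=\; \frac{\|\w_K\|_1}{\|\w\|_1} \;=\; \frac{1}{1+C}, \qquad 1-\beta \;=\; \frac{C}{1+C}.
\end{equation*}
Plugging these into the inequality supplied by Theorem~\ref{thm:almostEuclideanB} and multiplying through by $(1+C)^2$ gives
\begin{equation*}
\frac{1}{k} + \frac{C^2}{n-k} \;\leq\; \frac{(1+C)^2}{\alpha^2 n} \;=\; \frac{B}{n}.
\end{equation*}

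Next I would convert this to an inequality in the single variable $x = k/n$. Dividing through by $n$ gives $\frac{1}{x} + \frac{C^2}{1-x} \leq B$, and clearing the denominators by multiplying by $x(1-x) > 0$ yields
\begin{equation*}
(1-x) + C^2 x \;\leq\; B\,x(1-x),
\end{equation*}
which I would rearrange into the standard quadratic form
\begin{equation*}
B x^{2} - (B + 1 - C^{2})\,x + 1 \;\leq\; 0.
\end{equation*}

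Finally I would read off the bound from the quadratic. The leading coefficient $B > 0$, so the quadratic is nonpositive exactly on the closed interval between its two real roots
\begin{equation*}
x_{\pm} \;=\; \frac{(B+1-C^{2}) \pm \sqrt{(B+1-C^{2})^{2} - 4B}}{2B}.
\end{equation*}
(Feasibility of the hypothesis forces the discriminant to be nonnegative.) In particular, $x = k/n \geq x_{-}$, which is precisely the inequality in~(\ref{eq:largestsparsity}). There is no real obstacle here; the only step requiring any care is the algebraic simplification from Theorem~\ref{thm:almostEuclideanB} to the quadratic, and verifying the sign conventions so that the correct (smaller) root becomes the lower bound.
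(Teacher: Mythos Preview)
Your proposal is correct and follows essentially the same route as the paper: compute $\beta=\frac{1}{C+1}$ from the hypothesis, invoke Theorem~\ref{thm:almostEuclideanB}, and reduce to the inequality $\frac{1}{k/n}+\frac{C^2}{1-k/n}\leq \frac{(C+1)^2}{\alpha^2}$, which is then solved for $k/n$. The only difference is that you spell out the quadratic and its roots explicitly, whereas the paper simply writes ``Solving this inequality for $\frac{k}{n}$, we get~(\ref{eq:largestsparsity}).''
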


\begin{proof}
If $C\|\w_{K}\|_{1}\geq \|\w_{\overline{K}}\|_1$, we have
\begin{equation*}
\frac{\|\w_{K}\|_1}{\|\w\|_{1}}= \frac{1}{C+1}.
\end{equation*}
So by Theorem \ref{thm:almostEuclideanB}, $\beta=\frac{1}{C+1}$ satisfies
\begin{equation*}
\frac{\beta^2}{k}+\frac{(1-\beta)^2}{n-k} \leq \frac{1}{\alpha^2 n}.
\end{equation*}

This is equivalent to
\begin{equation*}
\frac{1}{\frac{k}{n}}+\frac{C^2}{1-\frac{k}{n}} \leq \frac{(C+1)^2}{\alpha^2}
\end{equation*}

Solving this inequality for $\frac{k}{n}$, we get (\ref{eq:largestsparsity}).

\end{proof}

So for a sparsity $\frac{k}{n}$, this corollary can be used to find
$C$ such that $\frac{\|\w_{K}\|_1}{\|\w\|_{1}}= \frac{1}{C+1}$.
Combining these results on computing $\sigma_{\text{min}}$, $\alpha$
and $C$, we can then compute the bound
$\frac{2(C+1)}{\sigma_{\text{min}} \alpha (C-1)} \sqrt{n}=\varpi$ in
Theorem \ref{thm:bound}. For example, when
$\delta=\frac{m}{n}=\frac{1}{2}$, we plot the bound $\varpi$ as a
function of $\frac{k}{n}$ in Fig. \ref{fig:varpi}

\begin{figure}[t]
\centering
\includegraphics[scale=0.5]{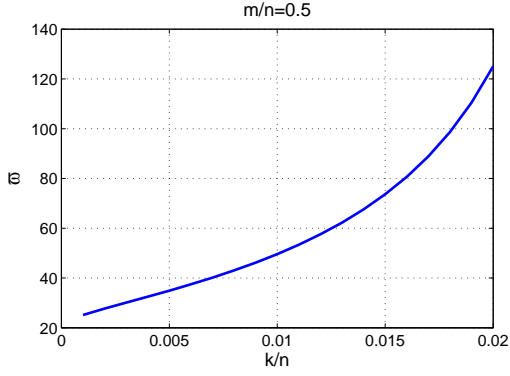}
\caption{$\varpi$ versus $\frac{k}{n}$}\label{fig:varpi}
\end{figure}

\section{Numerical Results}
\label{sec:numerical}
\textbf{Experiment 1:} We first consider recovering a signal vector from Gaussian
measurements. We generate the measurement matrix $H^{n \times m}$
with i.i.d. ${N}(0,1)$ entries and a vector $\bfx'\in
{R}^m$ with i.i.d Gaussian entries. Let
$\bfx=\bfx'/\|\bfx'\|_2$ be the signal vector. Let $m=60$ and
$n=150$. We first consider the recover performance when the number
of erroneous measurements is fixed. We randomly choose twelve
measurements and flip the signs of these measurements. For each
measurement, we also independently add a Gaussian noise from
${N}(0,\sigma^2)$. For a given $\sigma$, we apply
(\ref{eq:lambdaduality}) to estimate $\bfx$ using $\lambda$ from 0
to 13, and pick the best $\lambda^*$ with which the estimation error
is minimized. For each $\sigma$, the result is averaged over fifty
runs. Fig. \ref{fig:gaussianlambda2sigma} shows the curve of
$\lambda^*$ against $\sigma$. When the number of measurements with
bad data is fixed, $\lambda^*$ decreases as the noise level
increases.

\begin{figure}[t]
\centering
\includegraphics[scale=0.5]{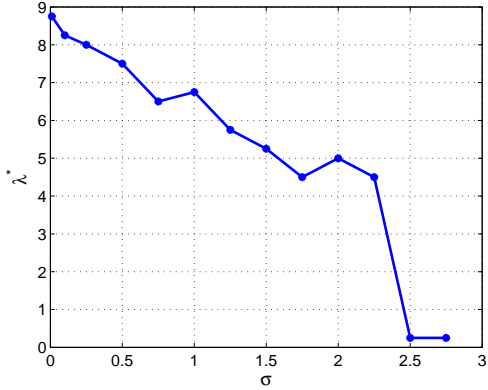}
\caption{$\lambda^*$ versus $\sigma$ for Gaussian
measurements}\label{fig:gaussianlambda2sigma}
\end{figure}

We next fix the noise level and consider the estimation performance
when the number of erroneous measurements changes. Each measurement
has a Gaussian noise independently drawn from ${N}(0,0.5^2)$. Let
$\rho$ denote the percentage of erroneous measurements. Given
$\rho$, we randomly choose $\rho n$ measurements, and each such
measurement is added with a Gaussian error independently drawn from
${N}(0,5^2)$. The estimation result is averaged over fifty runs.
Fig. \ref{fig:gaussianfixlambda} shows how the estimation error
changes as $\rho$ increases for different $\lambda$. $\lambda=8$ has
the best performance in this setup compared with a large value
$\lambda=15$ and a small value $\lambda=0.05$.

\begin{figure}[t]
\centering
\includegraphics[scale=0.5]{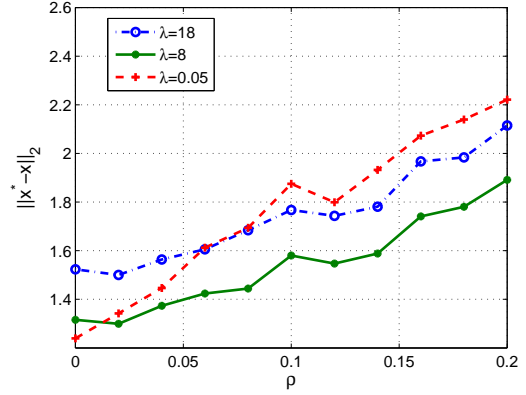}
\caption{$\lambda^*$ versus $\rho$ for Gaussian
measurements}\label{fig:gaussianfixlambda}
\end{figure}

\textbf{Experiment 2:} We also consider estimating the state of the power system from
available measurements and known system configuration. The state
variables are the voltage magnitudes and the voltage angles at each
bus. The measurements can be the real and reactive power injections
at each bus, and the real and reactive power flows on the lines. All
the measurements are corrupted with noise, and a small fraction of
the measurements contains errors. We would like to estimate the
state variables from the corrupted measurements. 

The relationship between the measurements and the state variables
for a $k'$-bus system can be stated as follows \cite{KV82}:
\begin{eqnarray}\label{eqn:sys1}
P_i&=&\sum_{j=1}^{k'} E_i E_j Y_{ij} \cos(\theta_{ij} +\delta_i
-\delta_j),
\end{eqnarray}
\begin{eqnarray}
Q_i&=&\sum_{j=1}^{k'} E_i E_j Y_{ij} \sin(\theta_{ij} +\delta_i
-\delta_j),
\end{eqnarray}
\begin{eqnarray}
P_{ij}&=&E_i E_j Y_{ij} \cos(\theta_{ij} +\delta_i -\delta_j) \nonumber\\
&&-E_i^2 Y_{ij}\cos \theta_{ij} +E_i^2 Y_{si} \cos\theta_{si} \quad
i \neq j,
\end{eqnarray}
\begin{eqnarray}
Q_{ij}&=&E_i E_j Y_{ij} \sin(\theta_{ij} +\delta_i -\delta_j)
\nonumber \\&&-E_i^2 Y_{ij}\sin \theta_{ij} +E_i^2 Y_{si}
\sin\theta_{si} \quad i \neq j, \label{eqn:sys2}
\end{eqnarray}
where $P_i$ and $Q_i$ are the real and reactive power injection at
bus $i$ respectively, $P_{ij}$ and $Q_{ij}$ are the real and
reactive power flow from bus $i$ to bus $j$, $E_i$ and $\delta_i$
are the voltage magnitude and angle at bus $i$. $Y_{ij}$ and
$\theta_{ij}$ are the magnitude and phase angle of admittance from
bus $i$ to bus $j$, $Y_{si}$ and $\theta_{si}$ are the magnitude and
angle of the shunt admittance of line at bus $i$. Given a power
system, all $Y_{ij}$, $\theta_{ij}$, $Y_{si}$ and $\theta_{si}$ are
known.

For a $k'$-bus system, we treat one bus as the reference bus and set
the voltage angle at the reference
bus to be zero. 
There are $m=2k'-1$ state variables with the first $k'$ variables for
the bus voltage magnitudes $E_i$ and the
rest $k'-1$ variables for the bus voltage angles $\theta_i$. 
Let $\bfx \in {R}^{m}$ denote the state variables and let
$\bfy \in {R}^{n}$ denote the  $n$ measurements of the real
and reactive power injection and power flow.
 Let $\bfv \in {R}^n$ denote the noise
and $\bfe \in {R}^n$ denote the sparse error vector. Then we
can write the equations in a compact form,
\begin{equation}
\bfy=h(\bfx)+\bfv+\bfe,
\end{equation}
where $h(\cdot)$ denotes $n$ nonlinear functions defined in
(\ref{eqn:sys1}) to (\ref{eqn:sys2}).

An estimate of the state variables, $\hat{\bfx}$, can be obtained by
solving the following minimization problem,
\begin{equation}\label{eqn:nonlinear}
\min_{\bfx, \bfz} \quad \|\bfy-h(\bfx)-\bfz\|_1 +\lambda \|\bfz\|_2,
\end{equation}
where $\hat{\bfx}$ is the optimal solution $\bfx$. $\lambda>0$ is a fixed parameter. When $\lambda \rightarrow \infty$, (\ref{eqn:nonlinear}) approaches
\begin{equation}\label{eqn:l1}
\min_{\bfx} \quad \|\bfy-h(\bfx)\|_1 ,
\end{equation}
and when $\lambda \rightarrow 0$,  (\ref{eqn:nonlinear}) approaches
\begin{equation}\label{eqn:l2}
\min_{\bfx} \quad \|\bfy-h(\bfx)\|_2.
\end{equation}

Since $h$ is nonlinear, we linearize the equations and apply an
iterative procedure to obtain a solution. We start with the initial
state $\bfx^0$ where $x^0_i=1$ for all $i \in \{1,...,n\}$, and
$x^0_i=0$ for all $i \in \{n+1,..., 2n-1\}$. In the $k$th iteration,
let $\Delta \bfy^k=\bfy-h(\bfx^{k-1})$, then we solve the following
convex optimization problem,
\begin{equation}\label{eqn:linear}
\min_{\Delta\bfx, \bfz} \quad \|\Delta\bfy^k-H\Delta\bfx-\bfz\|_1
+\lambda \|\bfz\|_2,
\end{equation}
where $H^{n \times m}$ is the Jacobian matrix of $h$ evaluated at
$\bfx^{k-1}$. Let $\Delta\bfx^k$ denote the optimal solution
$\Delta\bfx$ to (\ref{eqn:linear}), then the state estimation is
updated by
\begin{equation}
\bfx^{k}=\bfx^{k-1}+\Delta\bfx^k.
\end{equation}
We repeat the process until $\Delta \bfx^k \rightarrow 0$.

\begin{figure}[t]
\centering
\includegraphics[scale=0.25]{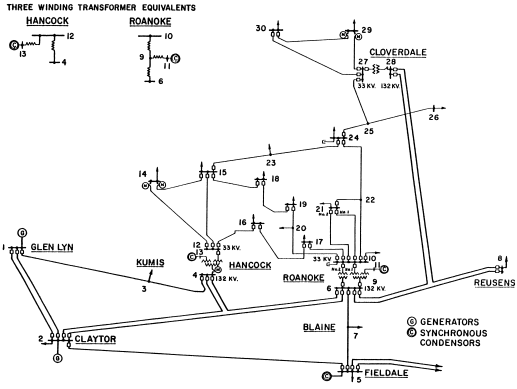}
\caption{IEEE 30-bus test system}\label{fig:bus}
\end{figure}

We evaluate the performance on the IEEE 30-bus test system. Fig.
\ref{fig:bus} shows the structure of the test system. Then the state
vector contains fifty-nine variables. We take one hundred
measurement including the real and reactive power injection at each
bus and some of the real and reactive power flows on the lines. We
first consider how the estimation performance changes as the noise
level increases when the erroneous measurements are fixed. The
errors of the measurements are simulated by inverting the sign of
the real power injection at bus 2, bus 3, bus 5, bus 26 and bus 30,
and inverting the sign of the reactive power injection at bus 30.
Each measurement also contains a Gaussian noise independently drawn
from ${N}(0,\sigma^2)$. For a fixed noise level $\sigma$, we
solve (\ref{eqn:nonlinear}) by the iterative procedure using
different $\lambda$ (from 0.5 to 12). The estimation performance is
measured by $\|\bfx^*-\hat{\bfx}\|_2$, where $\bfx^*$ is the true
state variable and $\hat{\bfx}$ is our estimation. For a fixed
$\sigma$, we choose the $\lambda^*$ to be the one with which
$\|\bfx^*-\hat{\bfx}\|_2$ is minimal among all the $\lambda$'s we
consider. The result is averaged over fifty runs. Fig.
\ref{fig:lambda2sigma} shows how $\lambda^*$ changes as $\sigma$
increases from 0 to 0.2. When the noise level is low, i.e. the
measurements basically  only contain errors, then the estimation
performance is better when we use a larger $\lambda$. When the noise
level is high, a smaller $\lambda$ leads to a better performance.
\begin{figure}[t]
\centering
\includegraphics[scale=0.5]{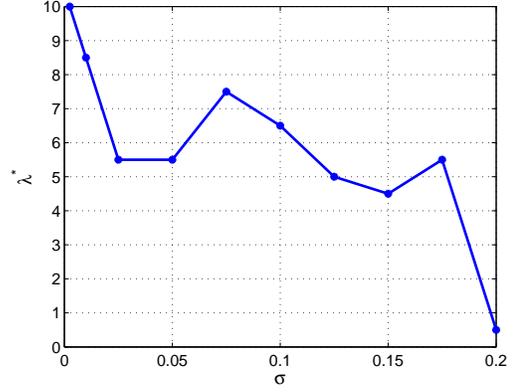}
\caption{$\lambda^*$ versus $\sigma$}\label{fig:lambda2sigma}
\end{figure}

We also study how the estimation performance changes as the number
of erroneous measurements increases. Each of the one hundred
measurements contains random Gaussian noise independently drawn from
${N}(0,0.05^2)$. Let $\rho$ denote the percentage of erroneous
measurements with bad data. For fixed $\rho$, we randomly choose the
set $T$ of erroneous measurements with cardinality $|T|=\rho m$.
Each erroneous measurement contains an additional Gaussian error
independently drawn from ${N}(0,0.7^2)$. We than calculate the
solution $\hat{\bfx}$ of (\ref{eqn:nonlinear}) and the estimation
error $\|\bfx^*-\hat{\bfx}\|_2$. Fig. \ref{fig:fixlambda} shows how
the estimation error changes as $\rho$ increases. The results are
averaged over fifty runs. When $\lambda$ is small ($\lambda=0.5$),
(\ref{eqn:nonlinear}) approaches (\ref{eqn:l1}), and the estimation
error is relatively large if $\rho$ is small, i.e. the measurements
basically contain only noise. When $\lambda$ is large
($\lambda=12$), (\ref{eqn:nonlinear}) approaches (\ref{eqn:l2}), and
the estimation error is relatively large if $\rho$ is large, i.e.
the measurements contains errors besides noise. In contrast, if we
choose $\lambda$ to be 7 in this case, the estimation error is
relatively small for all $\rho$ among the three choices of
$\lambda$.
\begin{figure}[t]
\centering
\includegraphics[scale=0.5]{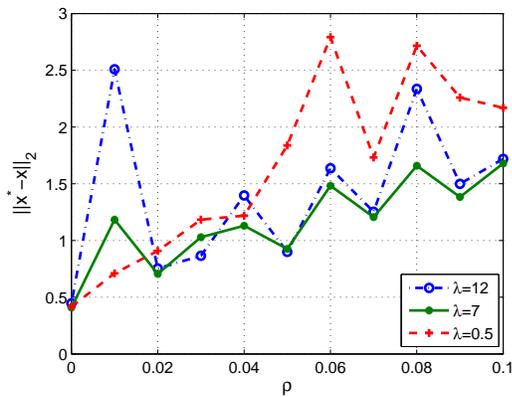}
\caption{The estimation error versus $\rho$}\label{fig:fixlambda}
\end{figure}

\section{Conclusion}\label{sec:conclusion}
In this paper, we study state estimation through observations corrupted with both bad data and additive observation noises. A mixed $\ell_1$ and $\ell_2$ convex programming is used to separate both sparse bad data and additive noises from the observations.
We used the almost Euclidean property of a linear subspace to provide sharp bounds on this convex programming based state estimation method. We also give sharp bounds for the almost Euclidean property of a linear subspace using the ``escape-through-a-mesh'' theorem from geometric functional analysis \cite{Gordon}. We then propose an iterative convex programming based methods to perform state estimation with bad data detection in the nonlinear electrical power network problems. Simulation results confirm the effectiveness of the algorithms in denoising and detecting bad data at the same time.


\section*{Acknowledgment}
The research is supported by NSF under CCF-0835706 and ONR under N00014-11-1-0131.

\bibliographystyle{IEEEbib}

\end{document}